\newtheorem{corollary}{Corollary}[section]
\newtheorem{definition}[corollary]{Definition}
\newtheorem{example}[corollary]{Example}
\newtheorem{lemma}[corollary]{Lemma}
\newtheorem{proposition}[corollary]{Proposition}
\newtheorem{theorem}[corollary]{Theorem}
\newcommand{\qed}{\rule{.07in}{.1in}}
\newenvironment{proof}{\vspace{1ex}\noindent\textbf{Proof}\hspace{0.5em}}{\hfill\qed\vspace{1ex}}
\title{Pseudocodeword-Free Criterion for Codes with Cycle-Free Tanner Graph}
\author{Wittawat Kositwattanarerk
\thanks{W. Kositwattanarerk is with the Department of Mathematics, Faculty of Science, Mahidol University, Bangkok 10400, Thailand and the Centre of Excellence in Mathematics, the Commission on Higher Education, Bangkok 10400, Thailand (e-mail: wittawat.kos@mahidol.edu).}}
\date{\today}
\begin{document}

\maketitle

\begin{abstract}
Iterative decoding and linear programming decoding are guaranteed to converge to the maximum-likelihood codeword when the underlying Tanner graph is cycle-free. Therefore, cycles are usually seen as the culprit of low-density parity-check (LDPC) codes. In this paper, we argue in the context of graph cover pseudocodeword that, for a code that permits a cycle-free Tanner graph, cycles have no effect on error performance as long as they are a part of redundant rows. Specifically, we characterize all parity-check matrices that are pseudocodeword-free for such class of codes.
\end{abstract}

Keywords: Iterative decoding, linear programming decoding, low-density parity-check (LDPC) code, Tanner graphs, pseudocodewords \\

Mathematics Subject Classification: 94B05

\section{Introduction}

Modern decoding algorithms such as message-passing iterative decoding and linear programming decoding are extremely efficient and are shown to enable communications at rates near the channel capacity under several circumstances. These decoders are known to converge when the Tanner graph is cycle-free, and so one of the design criteria for the constructions of LDPC codes is the number and size of cycles in the underlying graphs. Thus, regular and irregular LDPC codes are usually constructed semi-randomly with procedures that avoid cycles with small girth \cite{RSU,Tian,Xu}. There are also investigations on the effects of cycles on the performance of iterative decoders \cite{L}.


Another explanation for when iterative decoders fail to converge is by means of the pseudocodewords. Since the pseudocodewords satisfy every condition set by the decoder, they are legitimate from the perspective of the algorithm. The so-called pseudoweight acts as Hamming weight for LDPC codes, and the pseudocodewords provide a tangible framework for the study of the error performance of LDPC codes. For this reason, these noncodeword outputs and their properties have been extensively studied in the literature \cite{Axvig,KS,KLVW,KM,XF,ZSF}.

Wiberg \cite{W} is among the first to study noncodeword outputs from iterative decoders where computations from the algorithms are retracted and laid out as a tree. Since iterative decoders perform calculations locally, it is probable that the algorithms try to make an estimate on a graph that behaves locally like the original Tanner graph. In \cite{KLVW}, the pseudocodewords are characterized using a finite degree lift of the Tanner graph called a graph cover, and this description of the pseudocodeword relates well to noncodeword outputs from linear programming decoding \cite{FWK}. Excellent overviews of the pseudocodewords can be found, for example, in \cite{Axvig,KS}.

In this paper, we focus on the pseudocodewords arising from graph covers of the Tanner graph. We provide an exact condition for codes with cycle-free Tanner graph to be pseudocodeword-free. Although codes in this class are known to have limited capabilities \cite{ETV}, the results given here yield a surprising insight on the structure of the pseudocodewords: good representation for this class of codes has \textit{nothing} to do with cycles as long as there exists a spanning tree of the Tanner graph that represents the same code. As a result, this work sheds light on empirical phenomenons where, under certain circumstances, iterative decoders perform well despite a number of small cycles in the Tanner graph \cite{KLF} and eliminating small cycles does not significantly improve decoding performance of LDPC codes \cite{MN}.

The remainder of this paper is organized as follows. Section 2 provides some background on parity-check codes and the pseudocodewords. We introduce the notion of p-satisfy in Section 3 to  ease our searches for the pseudocodewords. The main result of this work is stated as Theorem \ref{th:main}. Several examples are given in Section 4.

\section{Preliminaries}

A code can be represented by many parity-check matrices, but a parity-check matrix uniquely determines a code. This choice of representation can be especially problematic in the context of decoding algorithms that operate on a parity-check matrix. To avoid unnecessary confusions, we define a code based on its parity-check matrix.

\begin{definition}
Let $H\in\mathbb{F}_2^{r\times n}$. The binary linear code with parity-check matrix $H$, denoted $C(H)$, is the null space of $H$. In other words,
\[C(H)=\{\mathbf{y}\in\mathbb{F}_2^n\mid H\mathbf{y}^T=\mathbf{0}\in\mathbb{F}_2^{r\times 1}\}.\]
\end{definition}

Here, we do not impose that rows of $H$ are linearly independent. The above definition coincides with the usual terminology--if $H\in\mathbb{F}_2^{r\times n}$, then $C(H)$ is a subspace of $\mathbb{F}_2^n$ of dimension at least $n-r$. We shall write $C$ as a code to mean a subspace of $\mathbb{F}_2^n$ without a specific choice of parity-check matrix and say that $H$ represents $C$ if $C=C(H)$.
Throughout, our discussions will focus on the parity-check matrix $H$ and not just the code $C$, since iterative decoders and linear programming decoder depend rather on the choice of parity-check matrix of the code. The Tanner graph from a binary matrix is defined next.

\begin{definition}
Let $H\in\mathbb{F}_2^{r\times n}$. The \textit{Tanner graph} of $H$, denoted $T(H)$, is a bipartite graph with biadjacency matrix $H$. Specifically,
\[T(H)=(X\cup F,E)\]
where $X=\{x_1,\dots,x_n\}$ represents the columns of $H$ and is called the set of \textit{bit nodes}, $F=\{f_1,\dots,f_r\}$ represents the rows of $H$ and is called the set of \textit{check nodes}, and
\[E=\{\{x_i,f_j\}\mid h_{ji}=1\}.\]
\end{definition}

Given a parity-check matrix $H$, it is clear that the Tanner graph is bipartite. The converse of this statement is also true: given a bipartite graph, it can be viewed as a Tanner graph of some parity-check matrix. In a way, the Tanner graph gives a graphical representation of the parity-check matrix. Suppose that binary values $c_1,c_2,\ldots,c_n$ are assigned to the bit nodes of the Tanner graph $T(H)$. Then, $\mathbf{c}=(c_1,c_2,\ldots,c_n)$ is a codeword of $C(H)$ if and only if the binary sum of the values at the neighbors of every check node is zero. Next, we give the definition of a graph cover.

\begin{definition}
Let $m$ be a positive integer. An $m$\textit{-cover} of $T(H)$ is a graph $\widehat{T}(H)$ with the property that there exists an $m$-to-1 mapping $\varphi$ from the vertices of $\widehat{T}(H)$ to the vertices of $T(H)$ that preserves degree and the set of neighbors. Namely, if $v$ is a vertex of $\widehat{T}(H)$ of degree $t$ and with neighbors $u_1,\ldots,u_t$, then $\varphi(v)$ is a vertex of $T(H)$ of degree $t$ and with neighbors $\varphi(u_1),\ldots\varphi(u_t)$.
\end{definition}

Although we only study a graph cover of the Tanner graph $T(H)$, we remark here that the above definition can be applied to give a graph cover of any graph. Since a graph cover $\widehat{T}(H)$ is bipartite, it can be viewed as a Tanner graph of some parity-check matrix. Similar to how $T(H)$ can be used to determine codewords, the graph cover $\widehat{T}(H)$ can be used to determine what is called graph cover pseudocodeword.

Before we proceed to the definition of graph cover pseudocodeword, we briefly discuss iterative decodings and linear programming decoding and refer readers to \cite{FWK,KFL} for a more precise description. Suppose that information are sent over a memoryless binary-input symmetric-output channel. Given a received word $\mathbf{w}$, maximum likelihood (ML) decoder finds $\mathbf{y}\in C$ that maximizes $P(\mathbf{w}\mid\mathbf{y})$. This is equivalent to choosing $\mathbf{y}\in C$ that minimizes
\begin{equation}\label{eq:ml}
\gamma_1y_1+\gamma_2y_2+\ldots+\gamma_ny_n
\end{equation}
where
\[\gamma_i=\log\left(\frac{P(w_i\mid y_i=0)}{P(w_i\mid y_i=1)}\right)\]
is the log-likelihood ratio at the $i^{\mathrm{th}}$ coordinate. Iterative decodings (also known as message-passing iterative decoding or belief propagation) perform inference on the Tanner graph. Roughly speaking, once a word $\mathbf{w}$ is received, the log-likelihood ratios $\gamma_i$ are assigned to the bit nodes. Each bit node then broadcasts its value to the neighboring check nodes. Once a check node receives the likelihood from all its neighboring bit nodes, it uses this information to make new estimates and send them back. Each bit node updates its likelihood, and then the process iterates.

Linear programming decoding aims to minimize \eqref{eq:ml} over the convex hull of $C(H)$ when implicitly embedded in $\mathbb{R}^n$. Since the number of constraints needed to describe this convex hull is exponential in block length, linear programming decoding is typically done over a relaxed polytope
\[\bigcap_{j=1}^rconv(C(Row_j(H))),\]
which is the intersection of the convex hull of codewords from the $r$ simple parity-check codes given by the rows of $H$.

The hallmark of iterative decoding is that the entire process is local, meaning that each vertex has its own agenda and makes decision independently. Rather than searching for a codeword that satisfies every parity condition collectively, the algorithms look for a codeword that satisfies each parity condition iteratively. Thus, the algorithms cannot distinguish between the Tanner graph and its cover. This results in the algorithms trying to converge to a legitimate binary value assignment on the graph cover. We call these graph cover pseudocodewords. In a similar fashion, linear programming decoding breaks parity-check matrix into a collection of parity conditions determined by each row of the matrix. While the relaxed polytope has a more tractable representation than the original codeword polytope, it could be the case that the relaxed polytope has a vertex that is not in the convex hull of $C(H)$. Those noncodeword vertex are a scale of a graph cover pseudocodeword \cite{FWK}. From now on, we use the term pseudocodeword to refer to graph cover pseudocodeword and give its definition next.

\begin{definition}
Let $\widehat{T}(H)$ be an $m$-cover of $T(H)$ with an $m$-to-1 mapping $\varphi$. Suppose that binary values $c_{i1},c_{i2},\ldots,c_{im}$ are assigned to the preimage of the bit node $x_i$ of $T(H)$ under $\varphi$ in such a way that the binary sum of the values at the neighbors of every check node of $\widehat{T}(H)$ is zero. (In other words, $c_{11},c_{12},\ldots,c_{1m},\ldots,c_{n1},c_{n2},\ldots,c_{nm}$ is a legitimate codeword of $\widehat{T}(H)$.) Then, the integer vector
\[\left(\sum_{k=1}^m{c_{1k}},\ldots,\sum_{k=1}^m{c_{nk}}\right)\]
is a \textit{pseudocodeword} of $H$. The set of all pseudocodewords of $H$ is denoted $PC(H)$.
\end{definition}

Note that $PC(H)$ collects pseudocodewords from an $m$-cover $\widehat{T}(H)$ for all values of $m$. In particular, a $1$-cover of $T(H)$ is $T(H)$ itself, so each codeword of $C(H)$ is considered a pseudocodeword.

\begin{example}\label{ex:basic}
Consider a parity-check matrix

\[H=\left(
\begin{array}{cccc}
1 & 1 & 1 & 0 \\
0 & 1 & 0 & 1
\end{array}\right).\]
It follows that
\[C(H)=\{(0,0,0,0),(0,1,1,1),(1,0,1,0),(1,1,0,1)\}.\]
The Tanner graph $T(H)$ and a codeword $(1,1,0,1)$ on $T(H)$ is shown in Figure \ref{fig:tanner}. Figure \ref{fig:cover} portrays a 2-cover of $T(H)$ along with a pseudocodeword $(1,2,1,0)$. Here, note that $(1,2,1,0)$ is not an integral combination of elements from $C(H)$.
\end{example}

\begin{figure}
\begin{center}
\includegraphics[width=5cm]{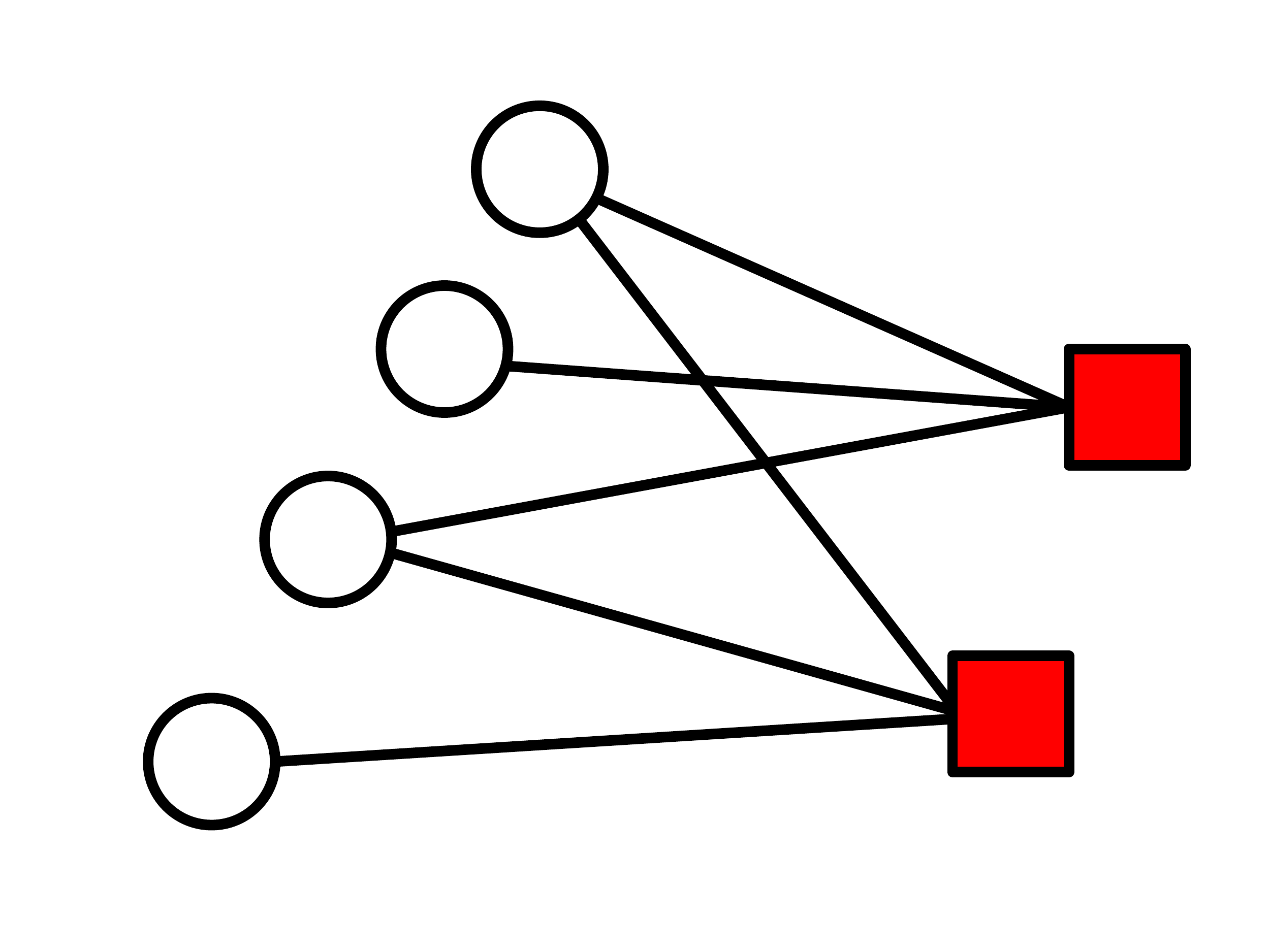} \quad\quad
\includegraphics[width=5cm]{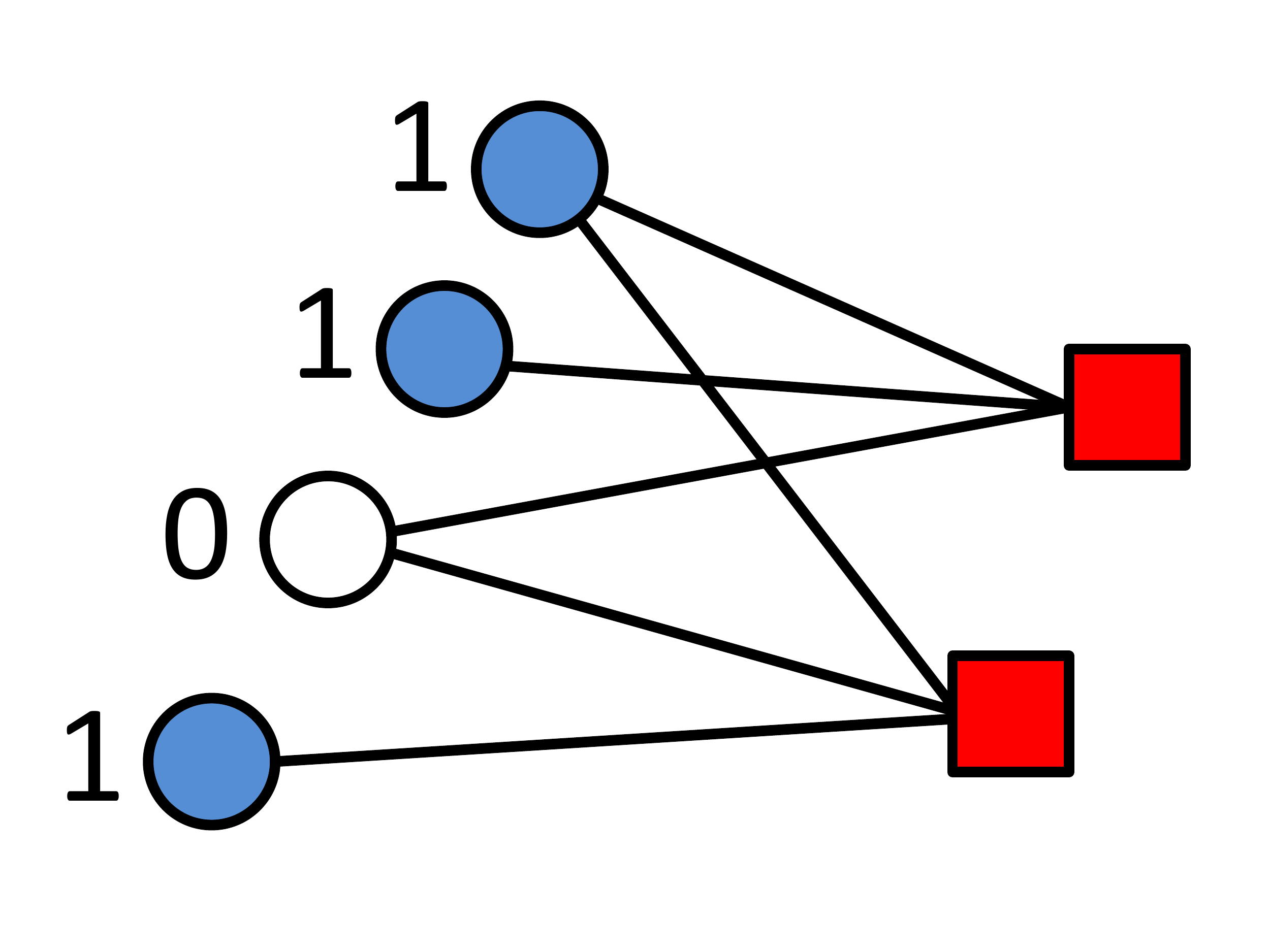}
\caption{\label{fig:tanner} On the left is the Tanner graph $T(H)$ from Example \ref{ex:basic}. White circles are bit nodes and represent columns of $H$. Red squares are check nodes and represent rows of $H$. An assignment $1,1,0,1$ to the bit nodes corresponds to the codeword $(1,1,0,1)$ of $C(H)$ and is shown on the right.}
\end{center}
\end{figure}

\begin{figure}
\begin{center}
\includegraphics[width=7cm]{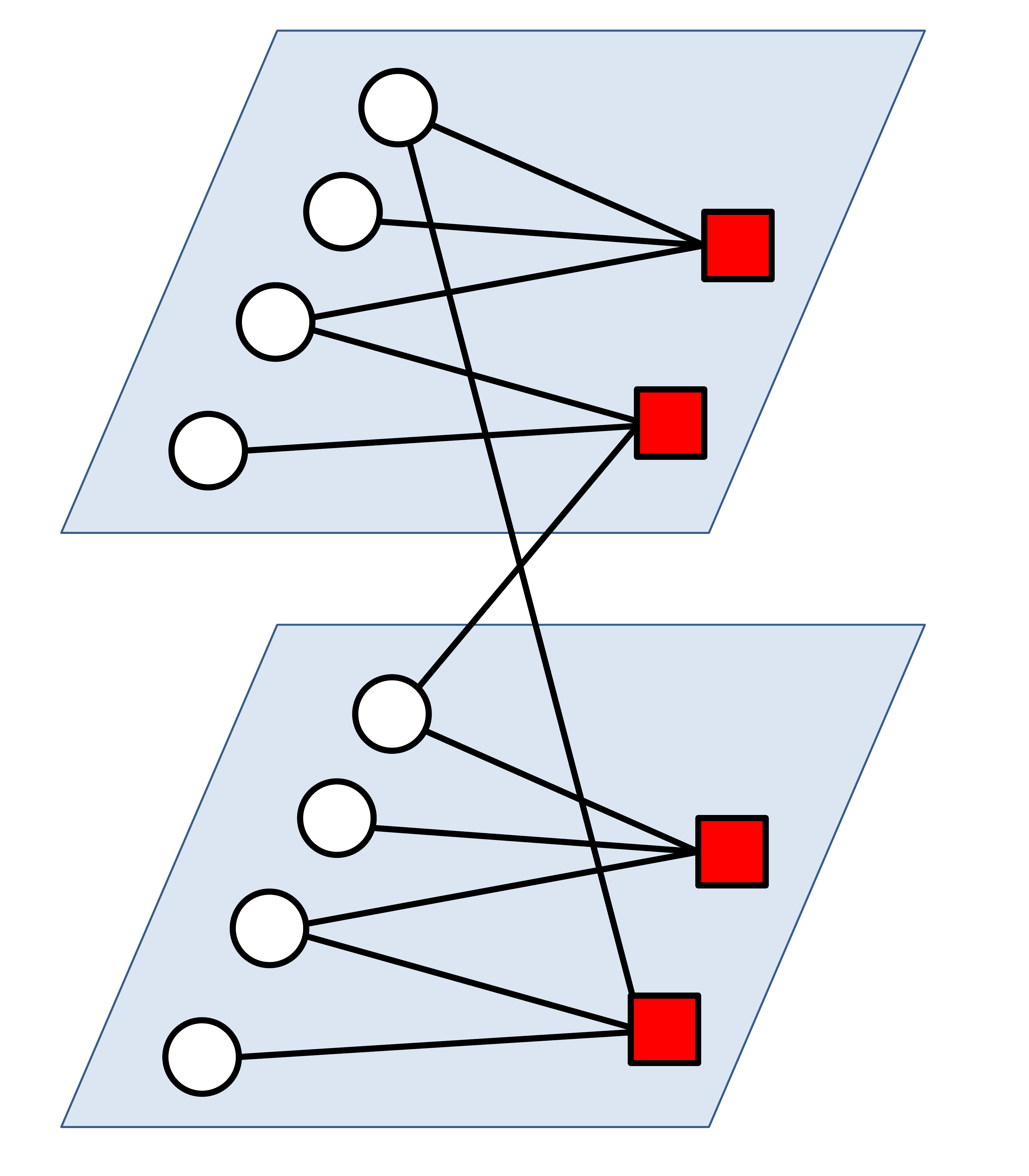} \quad\quad
\includegraphics[width=7cm]{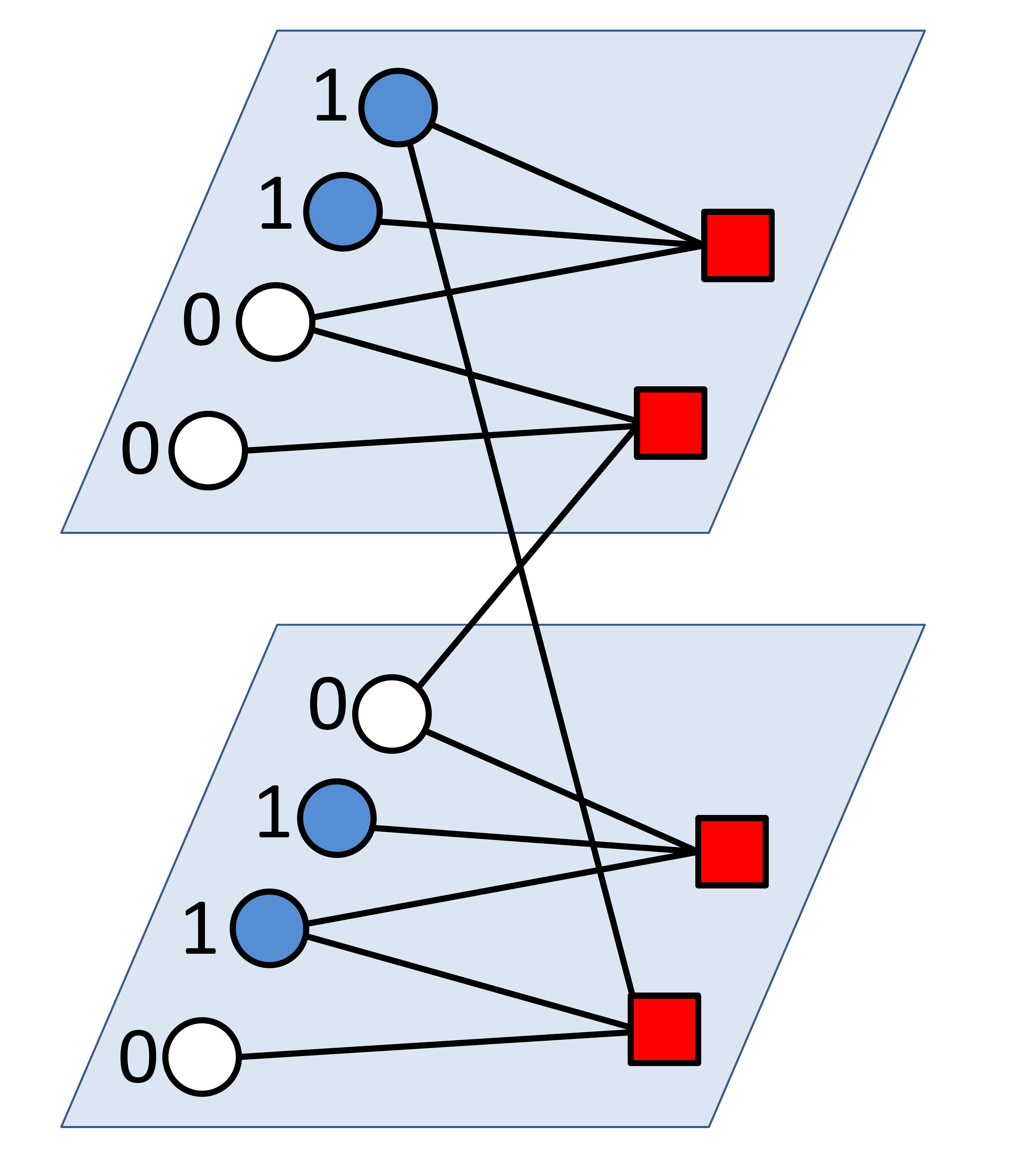}
\caption{\label{fig:cover} On the left is a 2-cover $\widehat{T}(H)$ of $T(H)$ from Example \ref{ex:basic}. A legitimate binary value assignment on the right corresponds to the pseudocodeword $(1,2,1,0)$.}
\end{center}
\end{figure}

Given a code $C$, it is desirable to represent $C$ with a parity-matrix $H$ so that the set of pseudocodewords $PC(H)$ is as small as possible. Toward this goal, we implicitly embed $C(H)$ in $\mathbb{R}^n$ and enumerate some elements of $PC(H)$. We have already seen that
\[C(H)\subseteq PC(H).\]
In fact, since a graph consisting of $m$ copies of the Tanner is consider an $m$-cover, any integral combinations of the elements of $C(H)$ whose coefficients are nonnegative are considered a pseudocodeword; that is,
\[\left\{\sum_{\mathbf{c}\in C(H)}{a_\mathbf{c}\mathbf{c}\:\bigg{|}\:a_\mathbf{c}\in\mathbb{N}}\right\}\subseteq PC(H).\]
Again, addition and multiplication here are done over $\mathbb{R}$. The case when the above inclusion becomes an equality deserves a special consideration.

\begin{definition}
A parity-check matrix $H$ is \textit{geometrically perfect} if the pseudocodewords of $H$ are precisely integer combinations of the codewords of $H$ with nonnegative coefficients. In other words,
\[PC(H)=\left\{\sum_{\mathbf{c}\in C(H)}{a_\mathbf{c}\mathbf{c}\:\bigg{|}\:a_\mathbf{c}\in\mathbb{N}}\right\}.\]
\end{definition}

Being geometrically perfect means that the set of pseudocodewords is kept as small as possible. We see that the matrix $H$ from Example \ref{ex:basic} is not geometrically perfect since $PC(H)$ contains $(1,2,1,0)$. A well-known class of geometrically perfect parity-check matrices is the collection of matrices whose Tanner graph is cycle-free. Here, if $\widehat{T}(H)$ is a cover of $T(H)$ that is cycle-free, then $\widehat{T}(H)$ simply consists of disconnected copies of $T(H)$, and so any pseudocodeword of $H$ is an integral combination of the codewords.

In this work, we tie the property of being geometrically perfect to a \textit{parity-check matrix} and not a \textit{code}. A code can be represented by many parity-check matrices, and for one to be geometrically perfect does not guarantee that all others are. It is shown in \cite{BG,Ka} that a code permits a geometrically perfect parity-check matrix if and only if it does not contain any code equivalent to $\mathcal{H}_7^\perp$, $R_{10}$, or $C(K_5)^\perp$ as a minor. However, it is not clear which parity-check matrix makes that so. Our work is a step toward understanding this choice of representation.

An algebraic characterization of the pseudocodewords is given in \cite{KLVW}. Since $PC(H)$ is closed under addition, its elements must form a cone in $\mathbb{R}^n$. Koetter et al. \cite{KLVW} identify this cone and precisely characterize the elements that are pseudocodewords.

\begin{definition}\label{def:funda}
Let $H\in\mathbb{F}_2^{r\times n}$. The \textit{fundamental cone} of $H$, denoted $K(H)$, is given by
\[K(H)=\left\{\mathbf{v}\in\mathbb{R}^n\:\bigg{|}\:v_i\geq0\textrm{ and }\sum_{l=1,l\neq i}^n{h_{jl}v_l}\geq h_{ji}v_i\textrm{ for all }1\leq i\leq n\textrm{ and }1\leq j\leq r\right\}.\]
\end{definition}

\begin{theorem}\cite[Theorem 4.4]{KLVW}\label{th:pseudo}
Let $H\in\mathbb{F}_2^{r\times n}$. An integer vector $\mathbf{p}$ is a pseudocodeword of $H$ if and only if
\[\mathbf{p}\in K(H)\quad\textrm{and}\quad H\mathbf{p}^T=\mathbf{0}\pmod{2}.\]
\end{theorem}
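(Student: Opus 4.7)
The plan is to prove each direction of the equivalence separately. For the forward direction, suppose $\mathbf{p}=(p_1,\ldots,p_n)$ with $p_i=\sum_{k=1}^m c_{ik}$ arises from an $m$-cover $\widehat{T}(H)$ carrying a legitimate binary assignment $(c_{ik})$. Non-negativity of every $p_i$ is automatic. For the congruence $H\mathbf{p}^T\equiv\mathbf{0}\pmod{2}$, fix a row index $j$ and sum the $m$ parity equations at the preimages of $f_j$: because each preimage of $x_i$ (when $h_{ji}=1$) is adjacent to exactly one preimage of $f_j$, the total collapses to $\sum_i h_{ji}p_i$, which is therefore even. For the cone inequality $p_i\le\sum_{l\neq i}h_{jl}p_l$ with $h_{ji}=1$, the key observation is that at each preimage of $f_j$ the binary value $\alpha\in\{0,1\}$ on its unique $x_i$-neighbor is at most the integer sum $\beta$ of values on the remaining neighbors (since $\alpha\equiv\beta\pmod 2$ and $\alpha\le 1$); summing this inequality over the $m$ preimages of $f_j$ yields the claim.

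For the backward direction, given $\mathbf{p}\in K(H)\cap\mathbb{Z}^n$ with $H\mathbf{p}^T\equiv\mathbf{0}\pmod 2$, I plan to construct an $m$-cover directly from $\mathbf{p}$. Pick $m$ large enough that $m\ge p_i$ for every $i$ and $m\ge\tfrac{1}{2}\sum_i h_{ji}p_i$ for every $j$, then label $p_i$ of the $m$ preimages of each $x_i$ with $1$ and the remaining $m-p_i$ with $0$. It then remains to specify, for each edge $\{x_i,f_j\}$ of $T(H)$, a perfect matching between preimages of $x_i$ and preimages of $f_j$ such that every check preimage sees an even number of $1$-valued neighbors. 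Because matchings on distinct edges are chosen independently, the problem decouples across check nodes: for each $f_j$ with neighbors $x_{i_1},\ldots,x_{i_d}$ it suffices to build a $d\times m$ binary matrix $B$ with row sums $a_s:=p_{i_s}$ and all column sums even.

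The main obstacle is this combinatorial lemma. The hypotheses translate into $\sum_s a_s$ being even (the $j$-th coordinate of $H\mathbf{p}^T\equiv\mathbf{0}$) together with $\max_s a_s\le\tfrac{1}{2}\sum_s a_s$ (from the cone inequality applied at each $i=i_s$). I plan to realize $B$ by first pairing the $\sum_s a_s$ \emph{$1$-tokens}, each typed by its row, into $\tfrac{1}{2}\sum_s a_s$ pairs of distinct types, and then placing each pair in its own column, yielding column sums in $\{0,2\}$. Existence of such a pairing under the two hypotheses is a standard balanced-partition fact provable by induction on $\sum_s a_s$: at each step remove one token from each of two maximum-valued types (or from a maximum-valued type together with any other nonzero type) and verify that the even-total and max-at-most-half conditions persist; the delicate case, in which two types tie at value $\tfrac{1}{2}\sum_s a_s$ and all others vanish, is pairable directly. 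The choice $m\ge\tfrac{1}{2}\sum_s a_s$ supplies enough columns to hold the pairs disjointly, with the remaining columns filled by zeros, completing the construction of the cover and the assignment realizing $\mathbf{p}$.
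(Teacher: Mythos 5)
The paper does not prove this statement; it is quoted verbatim from Koetter--Li--Vontobel--Walker, so there is no internal proof to compare against. Judged on its own, your argument is correct and self-contained. The forward direction is fine: summing the $m$ local parity equations over the fiber of $f_j$ gives $\sum_i h_{ji}p_i\equiv 0\pmod 2$ because each bit-node preimage meets exactly one preimage of $f_j$, and the pointwise inequality $\alpha\le\beta$ at each check preimage (from $\alpha\equiv\beta\pmod 2$, $\alpha\le 1$) sums to the cone inequality. The backward direction hinges on two observations you state correctly: the edge permutations defining a cover involve disjoint edge sets for distinct check nodes, so the construction decouples into one $d\times m$ matrix problem per check node with prescribed row sums $a_s=p_{i_s}$ and even column sums; and such a matrix exists because the hypotheses translate exactly into ``$\sum_s a_s$ even and $\max_s a_s\le\frac12\sum_s a_s$,'' which is the classical condition for pairing typed tokens into pairs of distinct types (your inductive step is sound, since $a_3>\frac{S-2}{2}$ together with $a_1\ge a_2\ge a_3$ would force $S<6$, and those small cases are handled directly), and $m\ge\frac12\sum_s a_s$ supplies enough columns. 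This is essentially the standard graph-cover realization argument from the pseudocodeword literature, so while it does not match anything in this paper, it is a legitimate proof of the cited theorem.
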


The inequalities of the fundamental cone demand, roughly speaking, that no single element of a pseudocodeword gets too large. Theorem \ref{th:pseudo} gives a convenient method to enumerate $P(H)$. Instead of going through all covers of the Tanner graph $T(H)$, one may verify whether $\mathbf{p}$ is a pseudocodeword by checking its entries against a set of inequalities and parity conditions determined by the rows of $H$. It is also easy to see that $C(H)\subset K(H)$, and so $K(H)$ contains every integral combination of the codewords.

\section{Pseudocodeword-Free Representation of a Code}

We have seen that the Tanner graph is a convenient graphical representation of a parity-check matrix. Each check node of the Tanner graph literally serves as a ``parity-check'' for a legitimate codeword. On the other hand, determining whether an integer vector is a pseudocodeword requires one to either find a suitable graph cover or appeal to Theorem \ref{th:pseudo} and test the vector algebraically. To assist our study of the pseudocodewords, we first give a new use to the Tanner graph.

Recall that $(c_1,c_2,\ldots,c_n)$ is a codeword of $C(H)$ if and only if the assignment $c_1,c_2,\ldots,c_n$ to the bit nodes of the Tanner graph $T(H)$ makes the binary sum of the neighbors of every check node zero. In other words, a check node is ``satisfied'' if the binary sum of its neighbors is zero, and a codeword is what makes every check node satisfied. Motivated by this observation, we generalize the property of being ``satisfied'' in the following definition.

\begin{definition}\label{def:p-sat}
Let $v$ be a vertex of a graph $G$ with the set of neighbors $\{u_1,u_2,\ldots,u_t\}$. Suppose that integer values $a_1,\ldots,a_t$ are assigned to these vertices. We say that $v$ is \textit{p-satisfied} if all of the the following conditions hold:
\begin{enumerate}[label=\textit{\roman*)}]
\item $a_i\geq0$ for all $i$,
\item $\sum_{l=1}^t{a_l}=0\pmod{2}$, and
\item $\sum_{l=1,l\neq i}^t{a_l}\geq a_i$ for all $1\leq i\leq t$.
\end{enumerate}
\end{definition}

Technically, the property of being p-satisfy localizes the conditions for pseudocodewords given in Theorem \ref{th:pseudo}. It makes the Tanner graph sufficient to verify whether an integer vector is a pseudocodeword, and we state this fact as the following proposition.

\begin{proposition}
Let $T(H)$ be the Tanner graph of $H\in\mathbb{F}_2^{r\times n}$. An integer vector $(p_1,\ldots,p_n)$ is a pseudocodeword if and only if the assignment $p_1,\ldots,p_n$ to the corresponding bit nodes of $T(H)$ makes every check node p-satisfied.
\end{proposition}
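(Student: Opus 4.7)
The plan is to unpack Definition \ref{def:p-sat} at each check node and match the resulting three conditions against the two conditions of Theorem \ref{th:pseudo}. Throughout, recall that the neighbors of a check node $f_j$ in $T(H)$ are exactly the bit nodes $x_i$ with $h_{ji}=1$.

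First, I would fix a row index $j$ and apply Definition \ref{def:p-sat} to $f_j$ with the assignment $p_i$ at each neighbor $x_i$. Condition (i) reads $p_i\geq 0$ whenever $h_{ji}=1$; condition (ii) reads $\sum_{i:\,h_{ji}=1} p_i\equiv 0\pmod{2}$, which is exactly the $j^{\mathrm{th}}$ component of $H\mathbf{p}^T\equiv\mathbf{0}\pmod{2}$; and condition (iii) reads $\sum_{l:\,h_{jl}=1,\,l\neq i} p_l\geq p_i$ for each neighbor $x_i$, which is the fundamental cone inequality $\sum_{l\neq i} h_{jl}p_l\geq h_{ji}p_i$ specialized to indices with $h_{ji}=1$.

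Next, I would argue that aggregating these conditions over all $j$ reproduces exactly the system in Theorem \ref{th:pseudo}. The parity conditions from (ii) are literally $H\mathbf{p}^T\equiv\mathbf{0}\pmod 2$. The nonnegativities from (i) give $p_i\geq 0$ for every bit node that appears in at least one row (and a short separate remark handles any isolated bit node, where the coordinate is unconstrained by both definitions in the same way, e.g.\ by noting that such a coordinate is automatically p-satisfied and that the pseudocodeword definition forces it to be a nonnegative integer anyway). Finally, for the cone inequalities indexed by $(i,j)$ with $h_{ji}=0$, the right-hand side $h_{ji}p_i$ is zero and the left-hand side $\sum_{l\neq i}h_{jl}p_l$ is a nonnegative integer by (i), so these inequalities are automatic once (i) and (iii) hold for every $j$.

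There is no substantial obstacle here; the argument is a careful bookkeeping exercise. The only subtlety I would be alert to is the direction of condition (iii): it is imposed by p-satisfaction only at the neighbors of $f_j$, whereas the fundamental cone writes the inequality for every pair $(i,j)$. As just noted, the extra inequalities collapse to nonnegativity statements and so are implied. Once both directions of the translation are verified, the proposition follows immediately from Theorem \ref{th:pseudo}.
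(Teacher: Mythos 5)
Your proof is correct and follows essentially the same route as the paper: both translate the fundamental-cone inequalities and parity conditions of Theorem \ref{th:pseudo} row by row into the three p-satisfaction conditions at each check node, using that $h_{jl}=1$ exactly when $x_l$ is a neighbor of $f_j$. Your extra remarks about the trivially satisfied inequalities when $h_{ji}=0$ and about isolated bit nodes are just finer bookkeeping of the same argument.
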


\begin{proof}
Combining Definition \ref{def:funda} and Theorem \ref{th:pseudo}, an integer vector $(p_1,\ldots,p_n)$ is a pseudocodeword if and only if, for all $1\leq j\leq r$,
\begin{enumerate}[label=\textit{\roman*)}]
\item $p_i\geq0$ for all $i$,
\item $\sum_{l=1,l\neq i}^n{h_{jl}p_l}\ge h_{ji}p_i$ for all $1\leq i\leq n$, and
\item $\sum_{l=1}^n{h_{jl}p_l}\equiv0\pmod{2}$.
\end{enumerate}
Since $h_{jl}=1$ precisely when the bit node $x_l$ is adjacent to the check node $f_j$, the above three conditions are satisfied for all $1\leq j\leq r$ if and only if every check node is p-satisfied with $p_1,\ldots,p_n$.
\end{proof}

We provide several useful results before stating the main finding of this work. The proof of Lemma \ref{lem:weight} is trivial and is omitted. Here, $w(\cdot)$ denotes the Hamming weight of a binary vector.

\begin{lemma}\label{lem:weight}
Let $H$ be a parity-check matrix. Suppose that the check node $f_{j_1}$ corresponding to $Row_{j_1}(H)$ and the check node $f_{j_2}$ corresponding to $Row_{j_2}(H)$ have no common neighbor. If $\mathbf{r}$ is the binary sum of $Row_{j_1}(H)$ and $Row_{j_2}(H)$, then
\[w(\mathbf{r})=w(Row_{j_1}(H))+w(Row_{j_2}(H)).\]
In particular,
\[w(\mathbf{r})\geq w(Row_{j_1}(H))\quad\textrm{and}\quad w(\mathbf{r})\geq w(Row_{j_2}(H)).\]
\end{lemma}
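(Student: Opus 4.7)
The plan is to translate the graph-theoretic hypothesis into a statement about the supports of the two rows and then observe that the binary addition behaves like a disjoint union on the support level.

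First, I would unwind the definition of the Tanner graph. The check node $f_{j}$ is adjacent to the bit node $x_l$ precisely when $h_{jl}=1$, so the neighborhood of $f_j$ in $T(H)$ corresponds exactly to the support of $Row_j(H)$. The hypothesis that $f_{j_1}$ and $f_{j_2}$ have no common neighbor therefore translates to the statement that there is no column index $l$ with $h_{j_1 l} = h_{j_2 l} = 1$; equivalently, the supports of $Row_{j_1}(H)$ and $Row_{j_2}(H)$ are disjoint subsets of $\{1,\dots,n\}$.

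Next, I would observe that, because the two supports are disjoint, no coordinate ever encounters the cancellation $1+1=0$ when the rows are summed over $\mathbb{F}_2$. Thus the $l^{\mathrm{th}}$ coordinate of $\mathbf{r}$ equals $1$ if and only if exactly one of $h_{j_1 l}$, $h_{j_2 l}$ is $1$, which under the disjoint-support assumption is the same as saying that at least one of them is $1$. Counting these $1$'s therefore yields
\[
w(\mathbf{r}) = |\mathrm{supp}(Row_{j_1}(H))| + |\mathrm{supp}(Row_{j_2}(H))| = w(Row_{j_1}(H)) + w(Row_{j_2}(H)).
\]
The two displayed inequalities then follow immediately from the non-negativity of Hamming weight.

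As the author already notes, there is no real obstacle here; the only subtlety is making the translation between the combinatorial condition on $T(H)$ (no common neighbor) and the algebraic condition on $H$ (disjoint row supports), and that translation is built into the definition of the Tanner graph.
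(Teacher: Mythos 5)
Your argument is correct: the no-common-neighbor hypothesis is exactly disjointness of the row supports, so the $\mathbb{F}_2$ sum has no cancellations and the weights add. The paper omits the proof as trivial, and your write-up is precisely the standard argument one would supply.
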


\begin{proposition}\label{prop:lindep}
Let $H\in\mathbb{F}_2^{r\times n}$. If $T(H)$ is a tree where every check node has degree at least 2, then
\begin{enumerate}[label=\textit{\roman*})]
\item the number of 1's in $H$ is precisely $r+n-1$,
\item if
\[\mathbf{r}=Row_{j_1}(H)+Row_{j_2}(H)+\ldots+Row_{j_s}(H)\pmod{2},\]
then $w(\mathbf{r})\geq w(Row_{j_t}(H))$ for $t=1,2,\ldots,s$, and
\item the rows of $H$ are linearly independent over $\mathbb{F}_2$.
\end{enumerate}
\end{proposition}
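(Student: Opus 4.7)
The plan is to handle the three parts in order, with (i) and (iii) following quickly and the real content concentrated in (ii). For (i), the Tanner graph $T(H)$ has $n+r$ vertices and is a tree, so it has exactly $n+r-1$ edges, each of which corresponds to a $1$ in $H$. For (iii), if some nontrivial $\mathbb{F}_2$-combination of rows of $H$ vanished, then (ii) applied to that combination would force $0 = w(\mathbf{0}) \geq w(Row_{j_t}) \geq 2$ (using $\deg_T(f_{j_t}) \geq 2$), a contradiction, so the rows are linearly independent.

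For (ii), I would recast the claim combinatorially. Let $F$ be the subgraph of $T(H)$ induced by $\{f_{j_1},\dots,f_{j_s}\}$ together with all of their bit neighbors; since $T(H)$ is a tree, $F$ is a forest, and $\deg_F(f_{j_k}) = \deg_T(f_{j_k})$ for every $k$. For a bit node $x$ write $d(x) = \deg_F(x)$, which is exactly the number of chosen rows with a $1$ in coordinate $x$. Then $\mathrm{supp}(\mathbf{r}) = W := \{x : d(x)\text{ odd}\}$, so fixing $t$ it suffices to show $|W| \geq \deg_T(f_{j_t})$.

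To produce the needed elements of $W$, I would use a charging argument. Root the component of $F$ containing $f_{j_t}$ at $f_{j_t}$, and for each child $x_i$ of $f_{j_t}$ let $T_i$ denote the subtree hanging below $x_i$; these subtrees are pairwise disjoint, so it is enough to find one element of $W$ inside each $T_i$. If $d(x_i)$ is odd, take $x_i$ itself. Otherwise $d(x_i)$ is even and at least $1$ (because of the parent edge to $f_{j_t}$), hence at least $2$, so $T_i$ has at least two vertices and therefore has a leaf $\ell \neq x_i$. This is where the hypothesis that every check node has degree $\geq 2$ is essential: such a leaf $\ell$ cannot be a check node, since then $\deg_F(\ell) = \deg_{T_i}(\ell) = 1$ would violate the hypothesis. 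Thus $\ell$ must be a bit node with $d(\ell) = 1$, placing $\ell$ in $W \cap T_i$.

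The main conceptual obstacle I anticipate is the bookkeeping that keeps the degree hypothesis usable after restricting from $T(H)$ down to $F$ and then into each $T_i$: one must check that every check node appearing in $T_i$ still has $F$-degree at least $2$, which is precisely what licenses the leaf-stripping step inside $T_i$. Once this is in place, the charging gives one distinct witness in $W$ for each neighbor of $f_{j_t}$, yielding $|W| \geq \deg_T(f_{j_t}) = w(Row_{j_t})$ and completing (ii).
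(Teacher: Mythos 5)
Your proof is correct and follows essentially the same route as the paper: parts (i) and (iii) are handled identically, and for (ii) both arguments split the tree at the distinguished check node $f_{j_t}$ into one branch per neighboring bit node and locate, in each branch, a bit node covered an odd number of times by taking a leaf of the forest spanned by the chosen check nodes and their bit neighbors, with the degree-$\geq 2$ hypothesis ruling out check-node leaves. Your explicit odd/even case distinction on $d(x_i)$ is, if anything, slightly more careful bookkeeping than the paper's rendering of the same step.
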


\begin{proof}
Since $T(H)$ is a tree with $r+n$ vertices, $T(H)$ has $r+n-1$ edges, and \textit{i)} follows.

Let $\mathbf{r}=Row_{j_1}(H)+Row_{j_2}(H)+\ldots+Row_{j_s}(H)\pmod{2}$. Without loss of generality, we shall prove that $w(\mathbf{r})\geq w(Row_{j_1}(H))$, and \textit{ii)} will readily follow. Consider the tree $T(H)$ with the check node $f_{j_1}$ corresponding to $Row_{j_1}(H)$ as a root node. Clearly, $T(H)\setminus Row_{j_1}(H)$, the subgraph of $T(H)$ with vertex $Row_{j_1}(H)$ removed, is a graph with $\deg(f_{j_1})=w(Row_{j_1}(H))$ connected components. Construct a check node $f_\mathbf{r}$ corresponding to $\mathbf{r}$ (that is, $x_i$ is adjacent to $f_\mathbf{r}$ if and only if the $i^\mathrm{th}$ element of $\mathbf{r}$ is 1). If we can show that $f_\mathbf{r}$ is connected to each component of $T(H)\setminus Row_{j_1}(H)$, then $\deg(f_\mathbf{r})\geq\deg(f_{j_1})$, and we can conclude that $w(\mathbf{r})\geq w(Row_{j_1}(H))$ as required.

Let $U$ be one of the connected components of $T(H)\setminus Row_{j_1}(H)$, and consider the check nodes among $f_{j_2},\ldots,f_{j_s}$ that belong to this component. If there is none, then $f_\mathbf{r}$ is adjacent to the bit node in this component that is adjacent to $f_{j_1}$. Assume without loss of generality now that check nodes $f_{j_2},\ldots,f_{j_k}$ belong to this component. Consider the subgraph of $U$ consisting of check nodes $f_{j_2},\ldots,f_{j_k}$ and their neighbors. This subgraph must be cycle-free and hence has at least two leaves. However, no check node can be a leaf since they have degree at least 2. The check node $f_\mathbf{r}$ must now be adjacent to one of the leaves and therefore connected to this component. This finishes the proof of \textit{ii)}.

Finally, \textit{iii)} follows from \textit{ii)} since we cannot have
\[\mathbf{0}=Row_{j_1}(H)+Row_{j_2}(H)+\ldots+Row_{j_s}(H)\pmod{2}\]
for any rows $j_1,\ldots,j_s$ of $H$.


\end{proof}

Proposition \ref{prop:lindep} will be helpful since we will be dealing with a parity-check matrix whose Tanner graph does not have a cycle. We know that a representation $H$ of a code is geometrically perfect if $T(H)$ is cycle-free. The following theorem now classifies all geometrically perfect representations of such code.

\begin{theorem}\label{th:main}
Let $C$ be a code that has a cycle-free representation. A parity-check matrix $H$ of this code is geometrically perfect if and only if (possibly empty) redundant rows of $H$ can be removed so as to obtain a representation of $C$ that is cycle-free.
\end{theorem}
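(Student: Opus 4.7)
The plan is to prove the two directions separately.

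For the easier direction ($\Leftarrow$), assume some subset of rows of $H$ yields a submatrix $H'$ with $T(H')$ cycle-free and $C(H')=C$. Any $m$-cover of a forest is a disjoint union of $m$ copies of that forest, so every pseudocodeword of $H'$ decomposes as a sum of codewords on the components; hence $PC(H')=\left\{\sum_{\mathbf c\in C}a_{\mathbf c}\mathbf c:a_{\mathbf c}\in\mathbb N\right\}$, and $H'$ is geometrically perfect. Reinstating the omitted rows of $H$ only adds inequalities to the fundamental cone of Definition~\ref{def:funda} and strengthens the parity conditions, so by Theorem~\ref{th:pseudo} we have $PC(H)\subseteq PC(H')$. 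Combined with the general reverse inclusion $\left\{\sum_{\mathbf c\in C}a_{\mathbf c}\mathbf c\right\}\subseteq PC(H)$ (which holds because $m$ disjoint copies of $T(H)$ form an $m$-cover) and $C(H)=C(H')$, equality holds, so $H$ is geometrically perfect.

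For the harder direction ($\Rightarrow$), I plan to argue by induction on the number of rows of $H$, with the inductive step reducing to the following lemma: \emph{if $H$ is geometrically perfect and $T(H)$ has a cycle, then some row appearing on some cycle of $T(H)$ is redundant in $H$.} Once such a row is found, removing it produces a matrix $H^{-}$ with $C(H^{-})=C$ and strictly fewer rows; with appropriate care in selecting which redundant row to remove (so that $H^{-}$ continues to admit a cycle-free subrepresentation), the induction closes. To establish the lemma, take a simple cycle through rows $r_{j_1},\dots,r_{j_l}$ and bits $x_{i_1},\dots,x_{i_l}$ of $T(H)$ and form $\boldsymbol\sigma=\sum_{m=1}^{l} r_{j_m}\pmod 2$. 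Since the cycle is simple, each cycle bit is shared by exactly two cycle rows, so $\boldsymbol\sigma$ vanishes on $\{i_1,\dots,i_l\}$. If $\boldsymbol\sigma=\mathbf 0$ then the cycle rows are linearly dependent over $\mathbb F_2$, so some $r_{j_m}$ lies in the span of $H\setminus\{r_{j_m}\}$ and is redundant, completing the lemma in this case.

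The step I expect to be the main obstacle is the opposite case $\boldsymbol\sigma\neq\mathbf 0$: here the aim is to derive a contradiction with the hypothesis that $H$ is geometrically perfect. The plan is to construct an explicit integer vector $\mathbf p\ge\mathbf 0$ satisfying $\mathbf p\in K(H)$ and $H\mathbf p^{T}\equiv\mathbf 0\pmod 2$, so that $\mathbf p\in PC(H)$ by Theorem~\ref{th:pseudo}, yet which is not a nonnegative integer combination of codewords of $C$. The natural starting point is to place value $1$ on each cycle bit, which automatically p-satisfies the cycle checks because each such check sees exactly two value-one neighbors, and then to extend the assignment to off-cycle bits so as to p-satisfy every remaining check of $T(H)$. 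Proposition~\ref{prop:lindep}(ii), applied to a cycle-free representation $H_0$ of $C$, is the tool I expect to use to control the weights of the codeword-valued adjustments needed for this extension. Finally, to certify that $\mathbf p$ is not a nonnegative integer combination of codewords, one exhibits a real linear form that vanishes on $C$ (for example, arising from a low-weight row of $H_0$ that pairs bits not paired by any row of $H$) but evaluates to a nonzero value on $\mathbf p$. Carrying out this extension rigorously and locating the separating linear form---which may require considering more general $\mathbb R$-linear forms than those obtained directly from rows of $H_0$---is the most technical portion of the argument.
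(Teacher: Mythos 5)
Your ($\Leftarrow$) direction is fine and matches the paper's (monotonicity of $PC$ under adding rows plus geometric perfection of cycle-free matrices). The ($\Rightarrow$) direction, however, has two genuine gaps. First, the induction does not close: after removing a redundant row from a geometrically perfect $H$, the resulting $H^{-}$ need not be geometrically perfect (removing rows only enlarges $PC$), so the inductive hypothesis cannot be invoked on $H^{-}$; the clause ``with appropriate care in selecting which redundant row to remove'' is precisely the whole difficulty and is left unspecified. The paper's own Example \ref{ex:wib} shows the danger: $H_3$ is geometrically perfect, but careless removal of redundant rows can lead to $H_2$, which still has cycles and redundant rows yet admits no cycle-free subrepresentation, so a greedy ``remove a redundant row on a cycle'' procedure can get stuck.

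Second, and more fundamentally, the planned proof of your key lemma fails in the $\boldsymbol\sigma\neq\mathbf 0$ case: you cannot hope to contradict geometric perfection there. Again Example \ref{ex:wib} is a counterexample to the intended implication: $H_3$ is geometrically perfect yet contains $4$-cycles whose two check rows are distinct, so the mod-$2$ sum of the rows on such a cycle is nonzero, and no extraneous pseudocodeword exists (in particular your ``put $1$ on the cycle bits and extend'' vector cannot be completed to one). The point is that a cycle row can be redundant via a dependence involving rows \emph{off} the cycle, so the dichotomy ``cycle rows dependent among themselves, or else $H$ not geometrically perfect'' is false, and any argument localized to a single cycle is doomed. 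The paper sidesteps all of this by arguing the contrapositive globally: it fixes a cycle-free representation $H'$ of $C$, uses the weight inequalities of Lemma \ref{lem:weight} and Proposition \ref{prop:lindep} to find a \emph{pivotal} row $j'_p$ of $H'$ every ``descendant'' row of which in $H$ has strictly larger weight, and then exhibits an explicit assignment ($2d$ on one branch of $T(H')\setminus f_{j'_p}$, $2$ elsewhere) that p-satisfies every check of $H$ but violates $f_{j'_p}$, hence is a pseudocodeword of $H$ that is not a nonnegative integer combination of codewords. To repair your outline you would need a selection rule of comparable global strength; the separating-linear-form step you defer is not the only missing piece.
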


\begin{proof}
Suppose that $C$ is a concatenation of $e$ codes, and let $H'\in\mathbb{F}_2^{r\times n}$ be any cycle-free representation of $C$. It is not hard to see that $T(H')$ is a forest, i.e., a collection of disconnected $e$ trees. We assume that $T(H')$ has no check node of degree 1, for if $T(H')$ has a check node of degree $1$, then the coordinate corresponding to the neighboring bit node can be punctured or ignored.

Assume that (possibly empty) rows of $H$ can be removed so as to obtain a representation $\widetilde{H}$ of $C$ that is cycle-free. That is, $C(\widetilde{H})=C(H)$, and $T(\widetilde{H})$ is cycle-free. Then, it follows from \cite[Theorem 6.1]{KS} that
\[PC(H)\subseteq PC(\widetilde{H}).\]
Now, since $C(\widetilde{H})=C(H)$, integer combinations of codewords of $\widetilde{H}$ and codewords of $H$ are the same. As $\widetilde{H}$ is geometrically perfect, we have
\[P(H)\subseteq P(\widetilde{H})=\left\{\sum_{c\in C(\widetilde{H})}{a_cc}\mid a_c\in\mathbb{N}\right\}=\left\{\sum_{c\in C(H)}{a_cc}\mid a_c\in\mathbb{N}\right\}\subseteq P(H).\]
Therefore, $H$ is geometrically perfect.

Suppose now that it is not possible to remove redundant rows of $H$ so that a cycle-free representation of $C$ can be obtained. Let $j'_1,j'_2,\ldots,j'_s$ be the rows of $H'$ that are not in $H$. This list is not empty since we cannot remove rows of $H$ to obtain $H'$. Since $H$ and $H'$ represent the same null space (i.e., the code $C$), for each $k=1,2,\ldots,s$, some rows of $H$ must be binary sum of $j'_k$ and a linear combination of some other rows of $H'$; we denote such rows $j_{k,1},\ldots,j_{k,a_k}$. It follows from Lemma \ref{lem:weight} and Proposition \ref{prop:lindep} that
\begin{equation}\label{eq:supp}
w(Row_{j_{k,l}}(H))\geq w(Row_{j'_k}(H'))
\end{equation}
for all $1\leq l\leq a_k$. Suppose for the sake of contradiction that, for each $k=1,2,\ldots,s$, there is $1\leq b_k\leq a_k$ such that $w(Row_{j_{k,b_k}}(H))=w(Row_{j'_k}(H'))$. This means we can remove rows of $H$ so that only rows $j_{1,b_1},j_{2,b_2},\ldots,j_{s,b_s}$ and rows of $H$ that are in $H'$ remain. This representation of $C$ is cycle-free, and hence is a contradiction. Therefore, there exists a row $j'_p$ of $H'$ such that the equality of \eqref{eq:supp} does not hold for any $1\leq l\leq a_p$. This row is essential to our proof, and we call it \textit{pivotal}.

The pivotal row $j'_p$ of $H'$ is represented by a check node $f_{j'_p}$ in the Tanner graph $T(H')$. Let $U'$ be the connected component of $T(H')$ that $j'_p$ belongs to (i.e., $U'$ is one of $e$ trees in the forest $T(H')$). Denote $d:=w(Row_{j'_p}(H'))$ so that the check node $f_{j'_p}$ has degree $d$. It is clear that $U'\setminus f_{j'_p}$, the subgraph of $U'$ with vertex $f_{j'_p}$ removed, is a cycle-free graph with $d$ connected components. We assign values $2d$ to all bit notes in one component and 2 to all other bit nodes of $T(H')$. We will prove that this assignment does not p-satisfy $f_{j'_p}$ but p-satisfies every check node of $T(H)$. Hence, we can appeal to Proposition \ref{prop:lindep} and conclude that $H$ is not geometrically perfect as $PC(H)$ contains an extraneous pseudocodeword.

\begin{figure}
\begin{center}
\includegraphics[width=15cm]{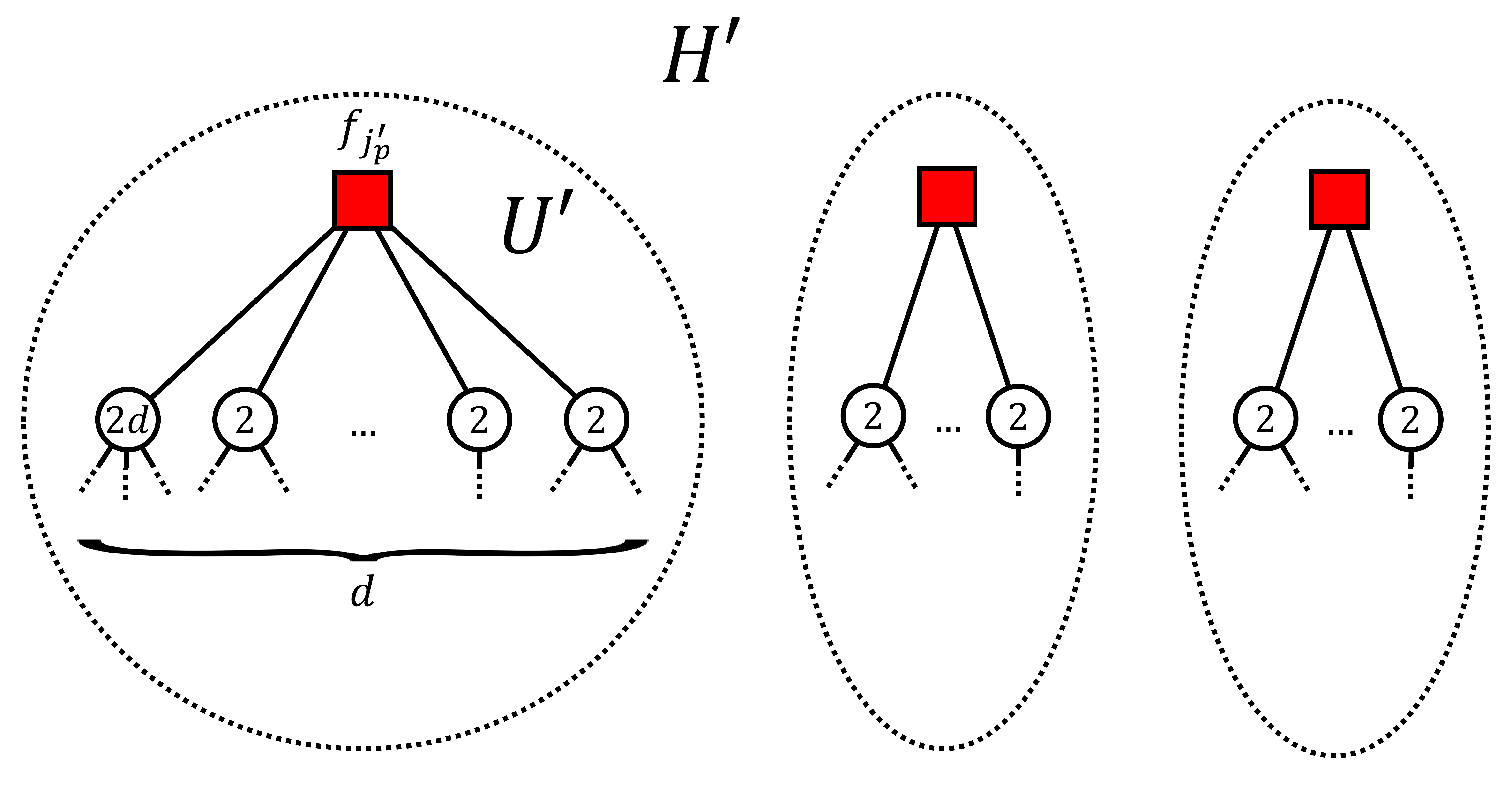} \\[20pt]
\includegraphics[width=15cm]{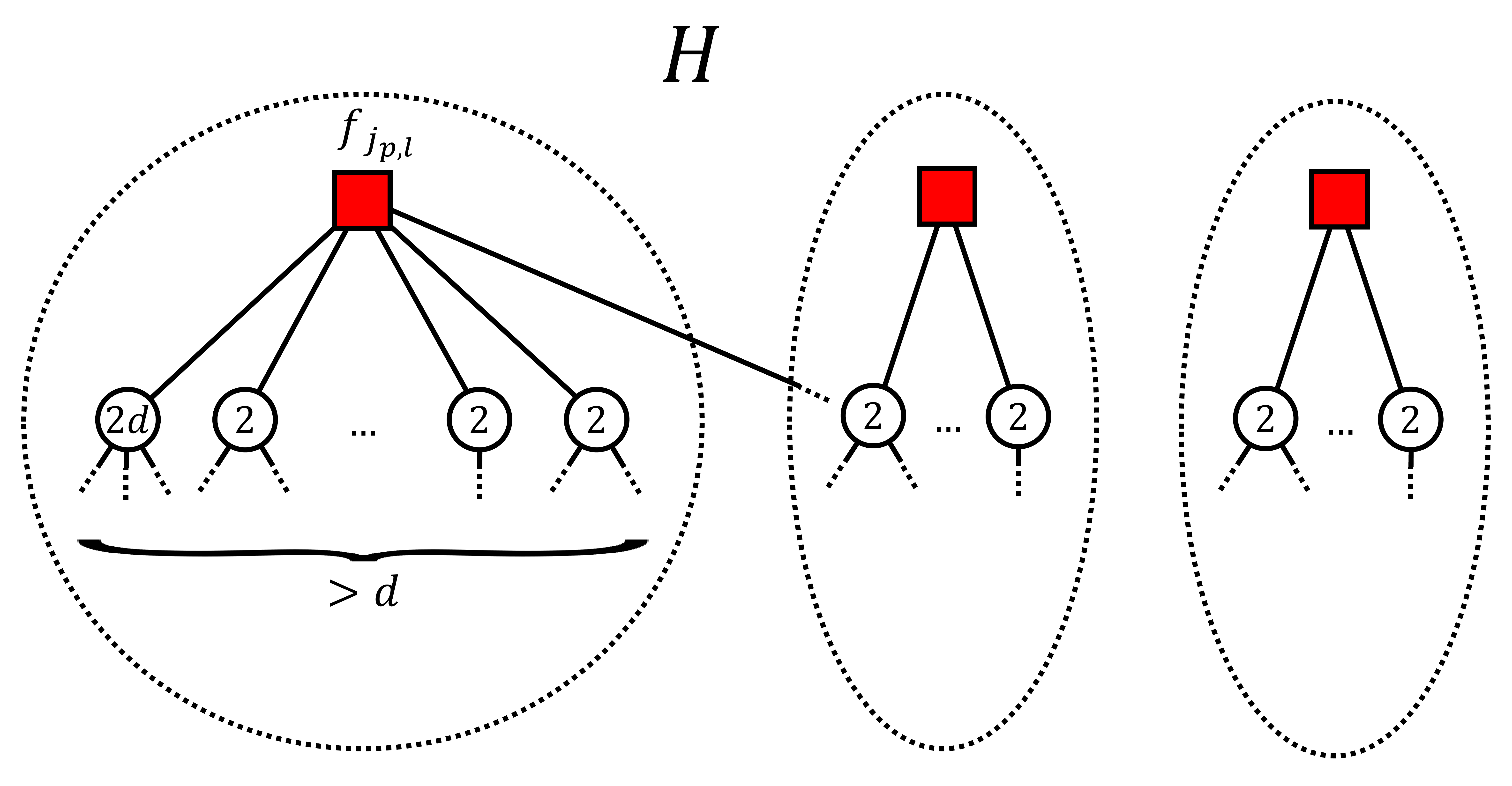}
\caption{\label{fig:proof} Check node $f_{j'_p}$ is not p-satisfied (top) while check node $f_{j_{p,l}}$ is (bottom).}
\end{center}
\end{figure}

Our arguments will center around condition \textit{iii)} of Definition \ref{def:p-sat}. Roughly speaking, this condition requires that the value at one neighboring bit node cannot be greater than the values at all other neighboring bit nodes combined. The check node $f_{j'_p}$ of $H'$ is adjacent to 1 bit node whose value is $2d$ and $d-1$ bit nodes whose value is $2$ (see Figure \ref{fig:proof}), and so this check node is not p-satisfied as $\underbrace{2+2+\ldots+2}_{d-1\textrm{ terms}}<2d$.

We are left to show that the same assignment makes a pseudocodeword for $H$. Conditions \textit{i)} and \textit{ii)} from Definition \ref{def:p-sat} are trivially satisfied for every check node of $T(H)$, and now only condition \textit{iii)} is to be verified. Rows of $H$ can be identified as either
\begin{itemize}
\item $j_{p,1},\ldots,j_{p,a_p}$, or
\item linear combination of rows $1,2,\ldots,j'_p-1,j'_p+1,\ldots,r$ from $H'$.
\end{itemize}
Since
\[w(Row_{j_{p,l}}(H))>w(Row_{j'_p}(H'))=d\]
for all $1\leq l\leq a_p$, the required condition is satisfied for check nodes $j_{p,1},\ldots,j_{p,a_p}$ of $H$ (see Figure \ref{fig:proof}).

Consider now a check node that is a linear combination of rows $1,2,\ldots,j'_p-1,j'_p+1,\ldots,r$ of $H'$. If this check node is only adjacent to bit nodes whose value is 2, then clearly condition \textit{iii)} is satisfied. Finally, if this check node is adjacent to a bit node whose value is $2d$, it must be adjacent to at least 2 such bit nodes since each check node has degree at least 2. Thus, conditions \textit{iii)} is satisfied, and this finishes the proof of the theorem.
\end{proof}

\section{Examples}

We present two examples in this section. The first one illustrates the key step used in the proof of Theorem \ref{th:main}. The second one showcases the code considered in \cite{W}. As we will see, it is possible that a parity-check matrix is geometrically perfect despite the presence of several small cycles.

\begin{figure}
\begin{center}
\includegraphics[width=17cm]{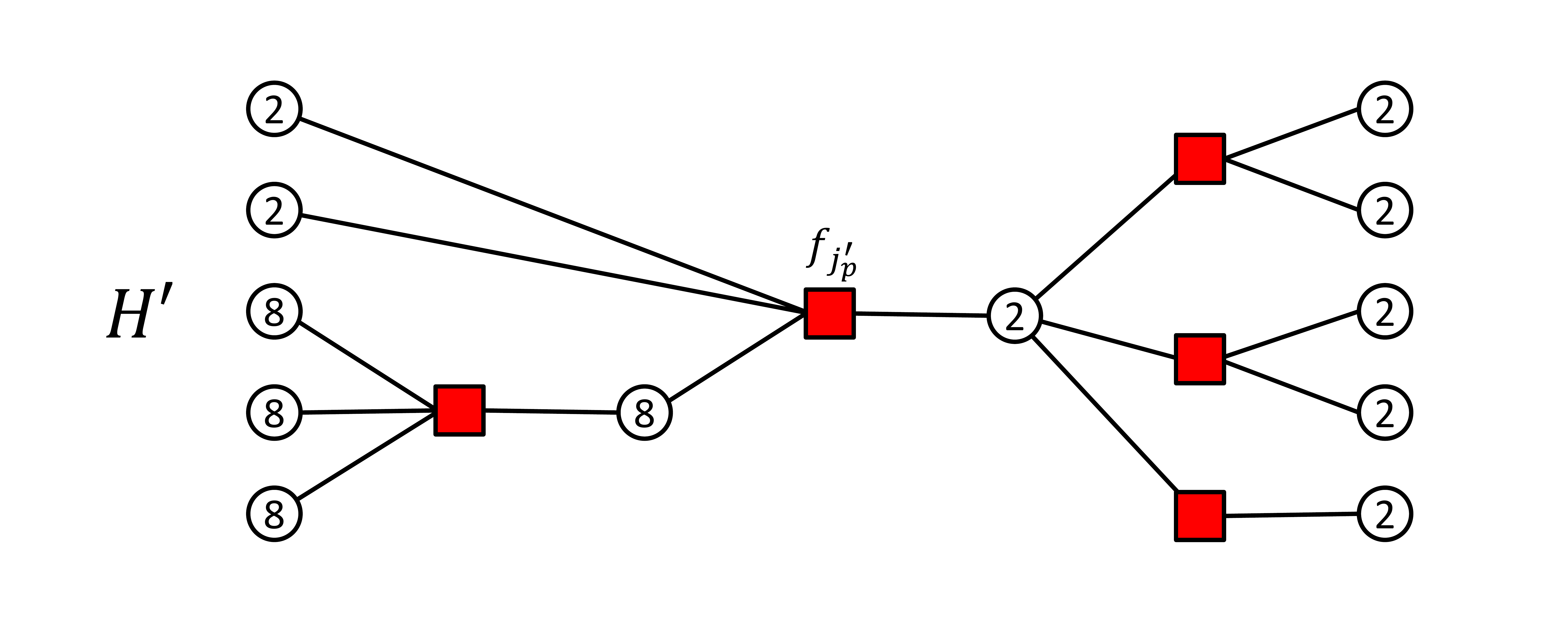} \\
\includegraphics[width=17cm]{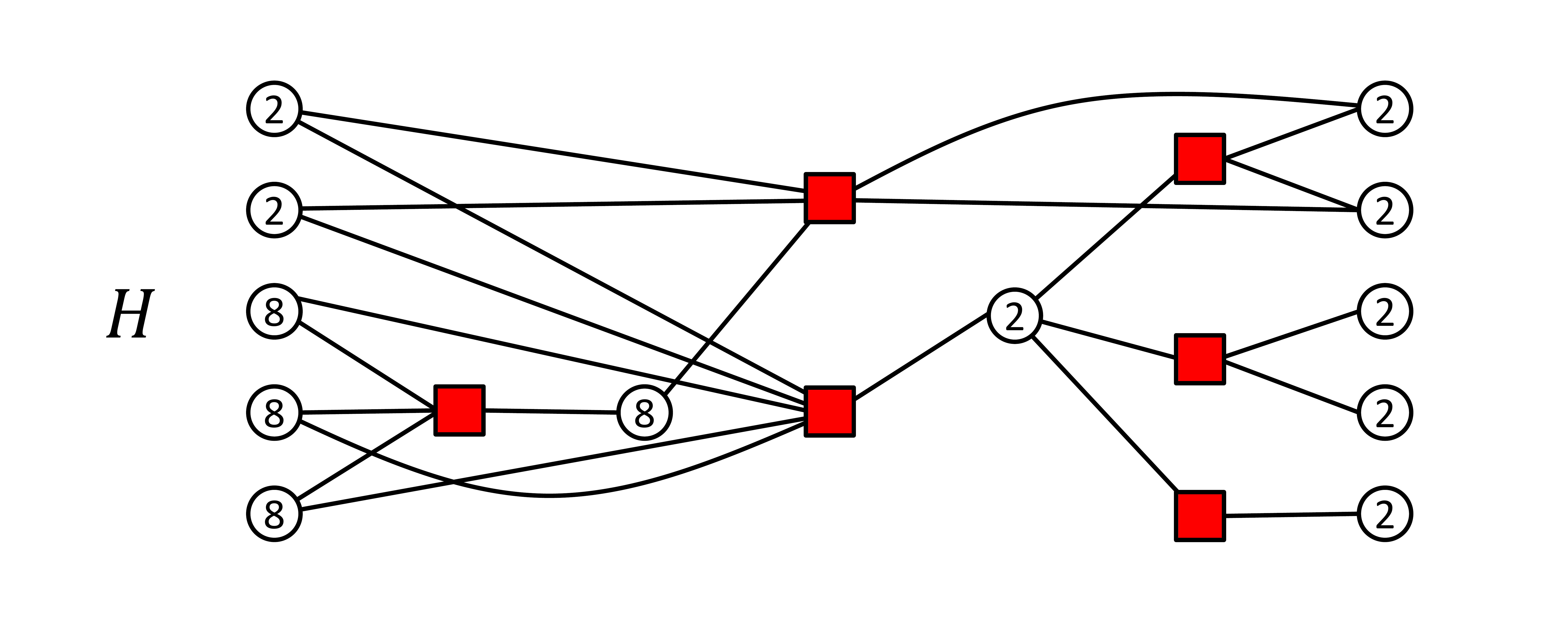}
\caption{\label{fig:example} On the top presents the Tanner graph $T(H')$ and the pivotal check node $f_{j'_p}$ from Example \ref{ex:proof}. The check node in consideration is not p-satisfied. At the bottom is the Tanner graph of $H$. One readily sees that every check node is p-satisfied, and so $(2,2,8,8,8,8,2,2,2,2,2,2)$ is a pseudocodeword for $H$.}
\end{center}
\end{figure}

\begin{example}\label{ex:proof}
Suppose that
\[H'=\left(
\begin{array}{ccccccccccccc}
0 & 0 & 1 & 1 & 1 & 1 & 0 & 0 & 0 & 0 & 0 & 0 \\
1 & 1 & 0 & 0 & 0 & 1 & 1 & 0 & 0 & 0 & 0 & 0 \\
0 & 0 & 0 & 0 & 0 & 0 & 1 & 1 & 1 & 0 & 0 & 0 \\
0 & 0 & 0 & 0 & 0 & 0 & 1 & 0 & 0 & 1 & 1 & 0 \\
0 & 0 & 0 & 0 & 0 & 0 & 1 & 0 & 0 & 0 & 0 & 1
\end{array}\right)\]
and
\[H=\left(
\begin{array}{ccccccccccccc}
0 & 0 & 1 & 1 & 1 & 1 & 0 & 0 & 0 & 0 & 0 & 0 \\
1 & 1 & 1 & 1 & 1 & 0 & 1 & 0 & 0 & 0 & 0 & 0 \\
1 & 1 & 0 & 0 & 0 & 1 & 0 & 1 & 1 & 0 & 0 & 0 \\
0 & 0 & 0 & 0 & 0 & 0 & 1 & 1 & 1 & 0 & 0 & 0 \\
0 & 0 & 0 & 0 & 0 & 0 & 1 & 0 & 0 & 1 & 1 & 0 \\
0 & 0 & 0 & 0 & 0 & 0 & 1 & 0 & 0 & 0 & 0 & 1
\end{array}\right).\]
Let $C=C(H')=C(H)$. It is not possible to remove rows of $H$ to obtain $H'$ or any cycle-free representation of $C$. Here, the second row of $H'$ is pivotal and is labeled $f_{j'_p}$ in Figure \ref{fig:example}. We have $d=w(Row_2(H'))=4$ and $T(H')\setminus f_{j'_p}$ has 4 connected components. Value 8 is assigned to all bit notes in one component, and value 2 is assigned to all other bit nodes. The check node $f_{j'_p}$ of $T(H')$ is not p-satisfied with this assignment. However, every check node of $T(H)$ is p-satisfied. This makes
\[PC(H)\subset PC(H'),\]
and so $H'$ is not geometrically perfect.
\end{example}

\begin{figure}
\begin{center}
\includegraphics[width=7cm]{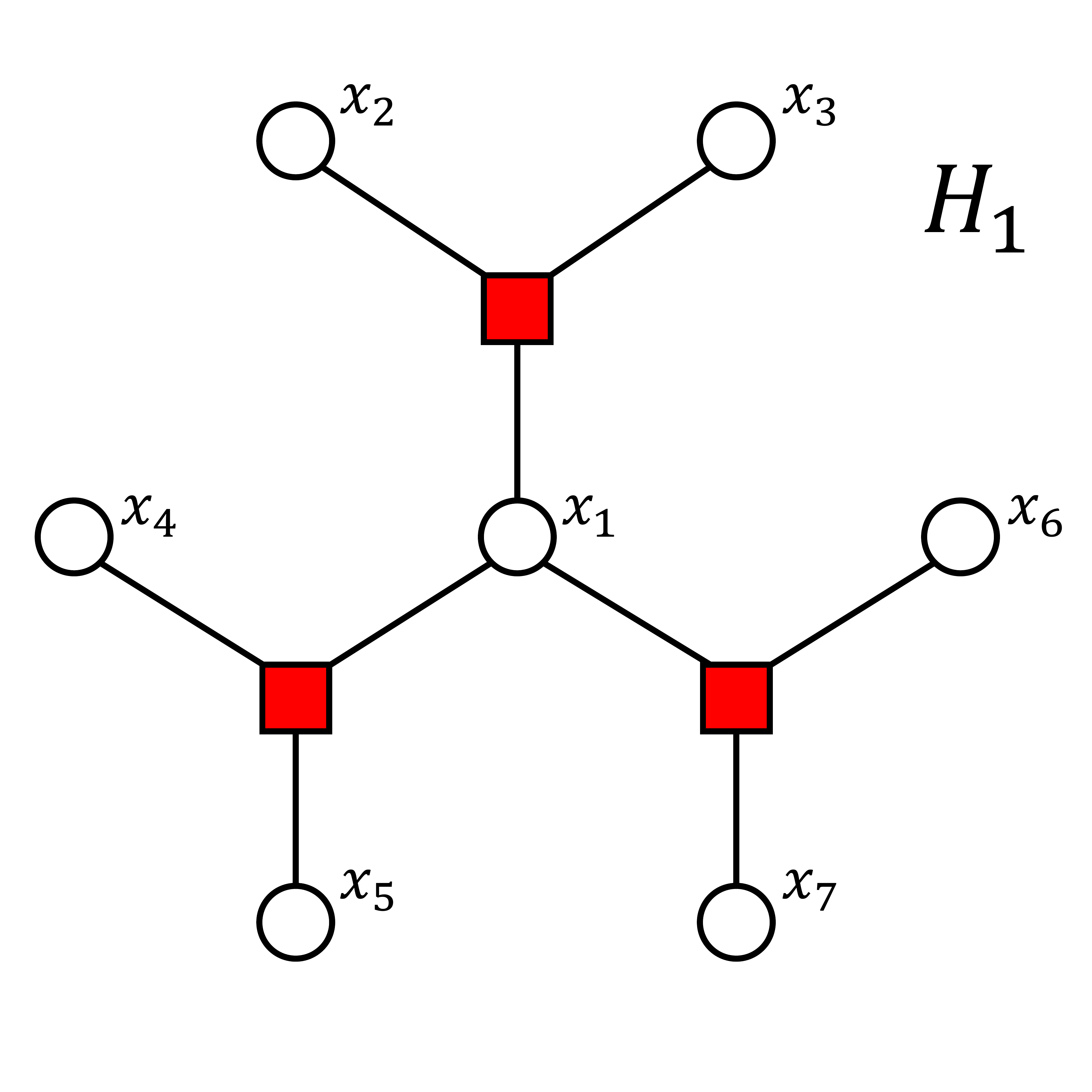} \quad
\includegraphics[width=7cm]{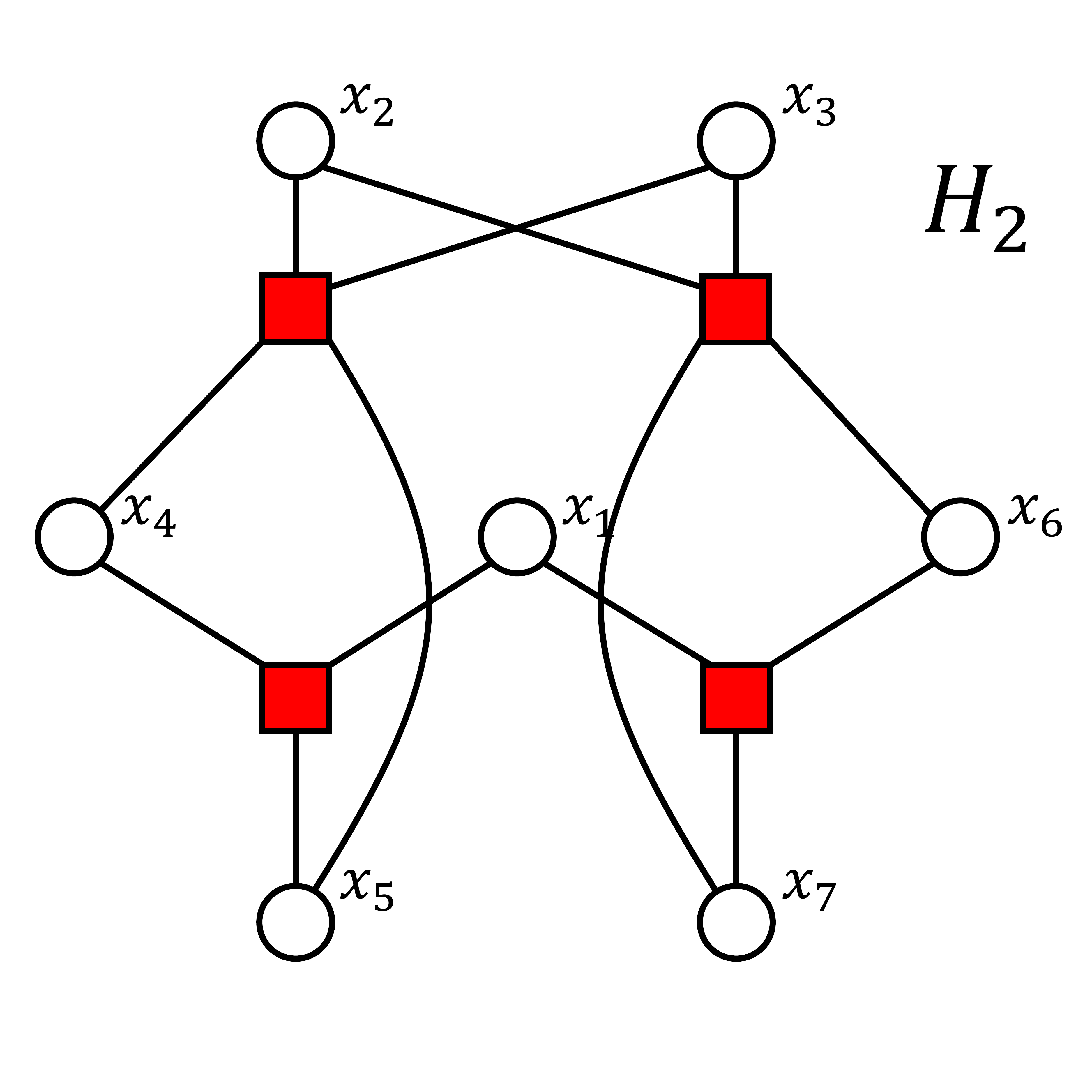} \\
\includegraphics[width=7cm]{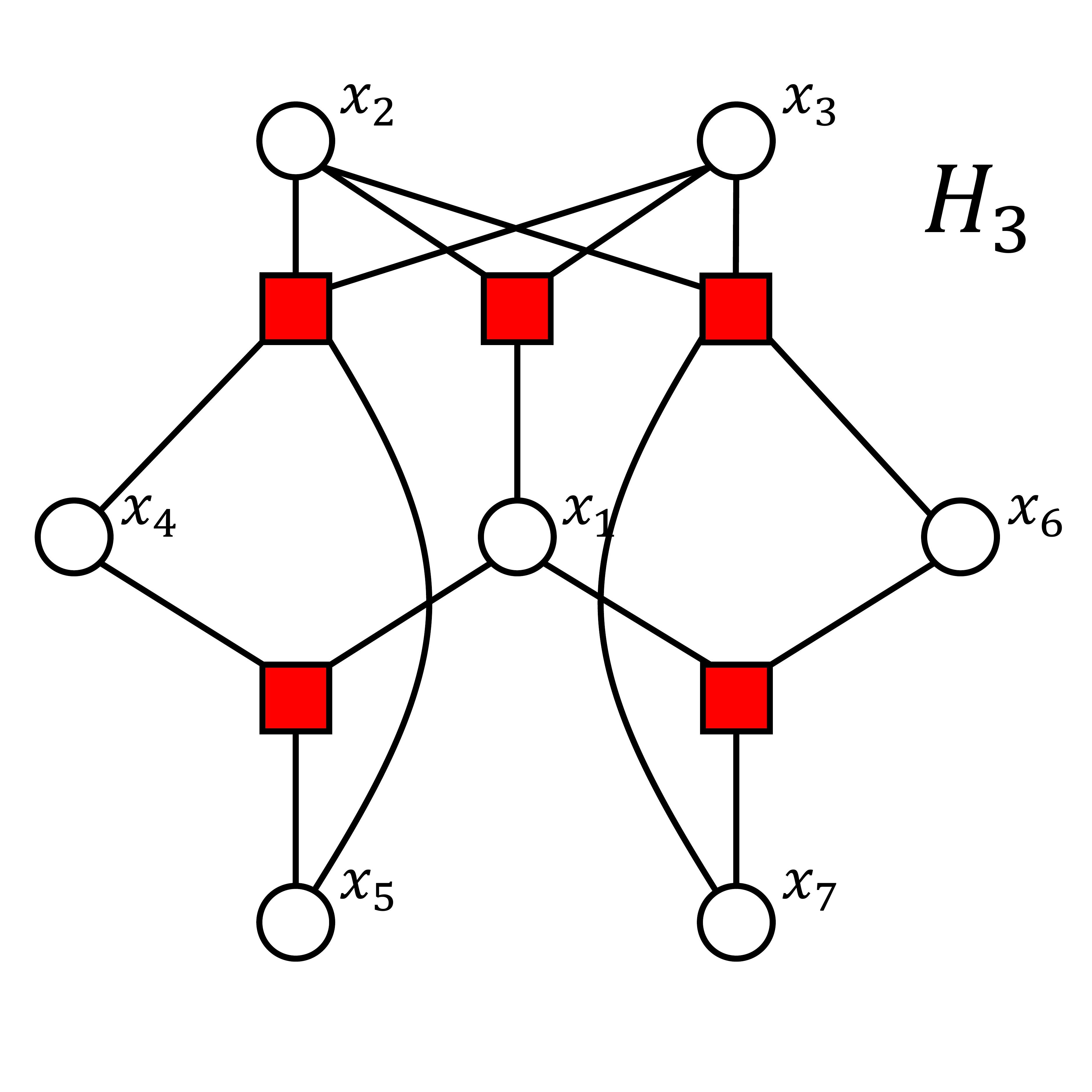}
\caption{\label{fig:wib} Three representations of the the code from Example \ref{ex:wib}}
\end{center}
\end{figure}

\begin{example}\label{ex:wib}
This example considers three representations of the code of length 7 and dimension 4 used to demonstrate the min-sum algorithm in \cite[Section 3.1]{W}. The first representation $H_1$ yields a tree, the second representation $H_2$ is a cycle code (i.e., every bit node has degree 2), and the third representation combines check conditions of the two. See Figure \ref{fig:wib}.

It is not hard to see that
\[C=\langle(1,1,0,1,0,1,0),(0,1,1,0,0,0,0),(0,0,0,1,1,0,0),(0,0,0,0,0,1,1)\rangle.\]
Now, $H_1$ is geometrically perfect, and so
\[PC(H_1)=\left\{\sum_{\mathbf{c}\in C}{a_\mathbf{c}\mathbf{c}\:\bigg{|}\:a_\mathbf{c}\in\mathbb{N}}\right\}.\]
On the other hand, there is no redundant row of $H_2$ whose removal yields a cycle-free representation of $C$. This representation is not geometrically perfect as
\[PC(H_2)=\left\{\sum_{\mathbf{d}\in D}{a_\mathbf{d}\mathbf{d}\:\bigg{|}\:a_\mathbf{d}\in\mathbb{N}}\right\}\]
where
\[D=C\cup\{(2,0,0,1,1,1,1),(0,2,0,1,1,1,1),(0,0,2,1,1,1,1),(2,2,2,1,1,1,1)\}.\]
The third representation combines the rows of $H_1$ and $H_2$, and is quite interesting since it contains both a subgraph that is a tree and several 4-cycles. In this case, the tree dominates as the representation $H_3$ is geometrically perfect. In other words, there are no pseudocodewords besides the integral combinations of the codewords. One plausible explanation here in terms of graph cover is that, although small cycles bolster the pseudocodewords, the cycle-free subgraph $T(H_1)$ forbids one from coming up.
\end{example}

\section*{Acknowledgments}
The author wishes to thank Gretchen L. Matthews for her support while the author is at Clemson University and Patanee Udomkavanich for her advice. This work is supported by the Thailand Research Fund under Grant TRG5880116 and the Centre of Excellence in Mathematics, the Commission on Higher Education, Thailand.

\end{document}